\newtheorem{theorem}{Theorem}
\newtheorem{proposition}{Proposition}
\newtheorem{proof}{Proof}
\begin{document}

\title{Energy Efficient Beamforming Optimization for Integrated Sensing and Communication}

\author{Zhenyao He,~Wei Xu,~\IEEEmembership{Senior Member,~IEEE,}~Hong Shen,~\IEEEmembership{Member,~IEEE,}\\~Yongming Huang,~\IEEEmembership{Member,~IEEE,} and Huahua Xiao

\thanks{
Z. He, W. Xu, and Y. Huang are with the National Mobile Communications Research Laboratory, Southeast University, Nanjing 210096, China, and also with the Purple Mountain Laboratories, Nanjing 211111, China (e-mail: \{hezhenyao, wxu, huangym\}@seu.edu.cn).

H. Shen is with the National Mobile Communications Research Laboratory, Southeast University, Nanjing 210096, China (e-mail: shhseu@seu.edu.cn).

H. Xiao is with the ZTE Corporation, and State Key Laboratory of Mobile Network and Mobile Multimedia Technology, Shenzhen 518055, China (e-mail: xiao.huahua@zte.com.cn).
}
}

\maketitle

\begin{abstract}
This paper investigates the optimization of beamforming design in a system with integrated sensing and communication (ISAC), where the base station (BS) sends signals for simultaneous multiuser communication and radar sensing. We aim at maximizing the energy efficiency (EE) of the multiuser communication while guaranteeing the sensing requirement in terms of individual radar beampattern gains.
The problem is a complicated nonconvex fractional program which is challenging to be solved.
By appropriately reformulating the problem and then applying the techniques of successive convex approximation (SCA) and semidefinite relaxation (SDR), we propose an iterative algorithm to address this problem. In theory, we prove that the introduced relaxation of the SDR is rigorously tight. Numerical results validate the effectiveness of the proposed algorithm.
\end{abstract}

\begin{IEEEkeywords}
Integrated sensing and communication (ISAC), energy efficiency (EE), beamforming optimization.

\end{IEEEkeywords}

\section{Introduction}
The requirements of high-precision sensing through wireless communication have recently attracted growing interests.
Integrated sensing and communication (ISAC), also known as joint radar-communication or dual-functional radar communications, is regarded as a promising solution to achieve this goal.
Compared to conventional systems of separate sensing and communication functions, ISAC brings tremendous advantages at cost reduction by spectrum resource sharing, hardware architecture reuse, and joint signal processing algorithms \cite{J.Zhang2021}.

For ISAC, it is of great importance to investigate the transmission design aiming at concurrent communication and radar sensing.
The authors of \cite{FLiu3} studied the problem of beamforming optimization in an ISAC system, where a waveform is reused for both radar sensing and data communication. This beamforming design minimized the beampattern matching error for sensing towards specific directions, subject to the constraints of minimum requirements of signal-to-interference-plus-noise ratio (SINR) at the users.
In \cite{M.Li2021}, the authors investigated a similar problem while utilizing the symbol-level precoding.
Alternatively, studies \cite{HHua} and \cite{XLiu} proposed to use dedicated signals that are irrelevant to the communication symbols for sensing. The proposed signals achieved enhanced performance because they allowed an additional degree of freedom (DoF) for sensing.
On the other hand, the problems of maximizing the transmitted beampattern gain towards sensing directions while guaranteeing the minimal SINR requirements of communication users were addressed in \cite{HHua} and \cite{full-duplex}.
In \cite{UAV}, the authors optimized the spectral efficiency in a UAV-assisted ISAC system, while ensuring the sensing requirements.
Moreover, a novel joint design of the transmitter and the radar receiver was addressed in \cite{C.Tsinos2021}, where the radar received SINR maximization is considered for the first time in ISAC.

With the rapidly growing demand for energy saving, energy efficiency (EE) has drawn increasing attentions as an important performance metric \cite{EE1,EE2,YeLiEE,A.Kaushik1,A.Kaushik2}.
For the purpose of reducing hardware complexity and power consumption, hybrid precoding was introduced in ISAC \cite{FLiu4} and the corresponding transmission optimization for EE maximization was studied in \cite{EEISAC}.
Specifically, by imposing the orthogonality assumption for the baseband precoder and using the Dinkelbach's method, the authors of \cite{EEISAC} first performed EE maximization with respect to the power allocation matrix. Then, the number of RF chains to be activated is determined based on the optimized matrix.
The authors in \cite{EEISAC2} proposed a successive convex approximation (SCA)-based algorithm to address the RF chain selection for EE maximization in ISAC.
However, these works do not directly address the EE maximized beamforming optimization problem.

This paper investigates an ISAC downlink system, where a base station (BS) simultaneously performs multiuser communication and radar target sensing by sending well-designed integrated signals.
Communication metric in terms of EE is the design goal for jointly optimizing the transmit beamforming, where the individual SINR requirements for user communications and the minimum beampattern gains towards sensing directions are treated as service constraints.
This problem is a complicated fractional program which is hard to be handled.
To solve this problem, we first derive an equivalent reformulation of the original problem and then adopt the techniques of SCA and semidefinite relaxation (SDR). Theoretically, we prove that the relaxation introduced by the SDR is tight.
The performance of the proposed algorithm is examined by the simulation results.

The rest of this paper is organized as follows. In Section \uppercase\expandafter{\romannumeral2}, the system model and the corresponding EE maximization problem are introduced. Section \uppercase\expandafter{\romannumeral3} presents the proposed algorithm. Simulation results are given in Section \uppercase\expandafter{\romannumeral4}. Finally, conclusions are provided in Section \uppercase\expandafter{\romannumeral5}.

\textit{Notations}: Vectors and matrices are denoted by boldface lower-case and boldface upper-case letters, respectively. $\mathbb C$ is the set of complex numbers. Superscripts $(\cdot)^T$ and $(\cdot)^H$ denote the matrix transpose and the conjugate transpose, respectively. $\text{Tr}(\cdot)$ denotes the trace of a matrix. The $\ell_2$ norm of a vector is denoted by $\|\cdot\|$. $\mathbb{E}\{\cdot\}$ returns the expectation and $\nabla$ is the gradient operator. $\text{rank}(\mathbf X)$ returns the rank of $\mathbf X$ and $\mathbf X \succeq \mathbf 0$ implies that $\mathbf X$ is positive semidefinite.

\section{System Model and Problem Formulation}
Consider an ISAC system consisting of a dual-functional BS and $K$ single-antenna users. The BS, equipped with an $N$-element uniform linear array (ULA),$\footnote{
The proposed algorithm is also applicable to the case where the BS is configured with uniform rectangular array or uniform circular array. }$
performs the downlink communication with $K$ users and conducts radar sensing towards $M$ potential target directions by simultaneously transmitting data signals $s_k,\ k=1,\cdots,K$ and a dedicated radar signal, $\mathbf s_0 \in \mathbb C^{N \times 1}$. These transmitted signals are written as
\begin{align}\label{def:x}
\mathbf x = \sum_{k=1}^K \mathbf v_k s_k + \mathbf s_0,
\end{align}
where $\mathbf v_k \in \mathbb C^{N \times 1}$ is the beamforming vector for user $k$ and $s_k$ satisfies a normalized power constraint, i.e., $\mathbb{E}\{ |s_k|^2 \} = 1$. The dual functional signal $\mathbf x$ is used for both communication and radar sensing.
Assume that the data signals intended to the users are mutually independent and are also uncorrelated with $\mathbf s_0$.
By defining $\mathbf V_0 \triangleq \mathbb{E}\{\mathbf s_0 \mathbf s_0^H\}$, the transmit power constraint at the BS is
$
\sum_{k=1}^K \| \mathbf v_k \|^2 + \text{Tr}(\mathbf V_0) \leq P_\text{max},
$
where $P_\text{max}$ denotes the budget of maximum transmit power.

Concerning the part of communication, the received radar signal is regarded as interference since it is unknown and meaningless to the users. Denote the channel from the BS to the $k$-th user by $\mathbf h_k \in \mathbb C^{N \times 1}$. Then, the SINR at user $k$ is expressed as
\begin{align}\label{def:gamma}
\gamma_k =  \frac{|\mathbf h_k^H \mathbf v_k |^2}{\sum_{i=1,i\neq k}^K |\mathbf h_k^H \mathbf v_i |^2 + \mathbf h_k^H \mathbf V_0 \mathbf h_k  + \sigma^2_k},
\end{align}
where $\sigma^2_k$ is the noise power at user $k$.

Concerning the radar sensing, the beamforming is expected to form a strong beampattern pointing to potential target directions of interest, thus yielding a larger radar receive signal-to-noise ratio (SNR) and better sensing performance \cite{MIMOradar2007,G.Cui2014}. Mathematically, the beampattern gain at a specific angular direction, $\theta$, is calculated as
\begin{align}
\!\! p(\theta) \!=\! \mathbb{E}\{ | \boldsymbol a^H(\theta) \mathbf x |^2 \}
\!=\! \boldsymbol a^H(\theta) \left(\sum_{k=1}^K  \mathbf v_k \mathbf v_k^H  +  \mathbf V_0 \right) \boldsymbol a(\theta),
\end{align}
where $\boldsymbol a(\theta) \triangleq  \frac{1}{N}[1,e^{j2\pi\frac{d}{\lambda}\sin(\theta)}, \cdots, e^{j2\pi\frac{d}{\lambda}(N-1)\sin(\theta)}]^T$
is the array steering vector of direction $\theta$, $d$ is the spacing between two adjacent antennas, and $\lambda$ represents the carrier wavelength.

Denote the system communication throughput by $R \triangleq \sum_{k=1}^K \log_2(1 + \gamma_k)$ and evaluate the total power consumption of the entire system as
\begin{align}\label{powermodel}
P_\text{total} = \frac{1}{\rho} \sum_{k=1}^K \| \mathbf v_k \|^2 + \frac{1}{\rho} \text{Tr}(\mathbf V_0) + P_{\rm c} + \xi R,
\end{align}
where $ \rho \in (0,1]$ stands for the amplifier efficiency of the BS, $P_{\rm c}$ is the summation of static circuit powers consumed by the BS and the users, and $\xi R$ represents the dynamic power with $\xi > 0$ denoting the dynamic power consumption per unit data rate\cite{YeLiEE}. Thus, the system EE is expressed as
\begin{align}
\eta_\text{EE} = \frac{ R }{\frac{1}{\rho} \sum_{k=1}^K \| \mathbf v_k \|^2 + \frac{1}{\rho} \text{Tr}(\mathbf V_0) + P_{\rm c} + \xi R}.
\end{align}

In this paper, we aim to jointly design the communication and radar beamforming, i.e., $\{\mathbf v_k\}_{k=1}^K$ and $\mathbf V_0$, at the BS to maximize $\eta_\text{EE}$. We consider that each user has an individual requirement of SINR, and the sensing performance is guaranteed by constraining minimal beampattern gains for $M$ directions.
By dividing the numerator and the denominator of $\eta_\text{EE}$ by $R$, it can be verified that the maximization of $\eta_\text{EE}$ is equivalent to the maximization of $\eta'_\text{EE} = \frac{R}{ \frac{1}{\rho} \sum_{k=1}^K \| \mathbf v_k \|^2 + \frac{1}{\rho} \text{Tr}(\mathbf V_0) + P_{\rm c}}$.
Accordingly, we formulate the EE maximization problem as
\begin{align}\label{Prob:ini}
\mathop \text{maximize} \limits_{\{\mathbf v_k\}_{k=1}^K, \mathbf V_0 \succeq \mathbf 0} \quad
&\eta'_\text{EE} = \frac{R}{ \frac{1}{\rho} \sum_{k=1}^K \| \mathbf v_k \|^2 + \frac{1}{\rho} \text{Tr}(\mathbf V_0) + P_{\rm c}}  \nonumber \\
\text{subject to}\quad
& \gamma_k \geq  \tau_k, \quad k = 1, \cdots, K \nonumber \\
& \sum_{k=1}^K \| \mathbf v_k \|^2 + \text{Tr}(\mathbf V_0) \leq P_\text{max}, \nonumber \\
& \boldsymbol a^H(\theta_m) \left(\sum_{k=1}^K  \mathbf v_k \mathbf v_k^H  +  \mathbf V_0 \right) \boldsymbol a(\theta_m) \geq \Gamma_m, \ m = 1, \cdots, M
\end{align}
where $\tau_k$ denotes the SINR requirement of user $k$, $\theta_m$ stands for the $m$-th target direction for radar sensing, and $\Gamma_m$ is the corresponding lowest beampattern gain for successful sensing. One can achieve the trade-off between the communication and the radar performances by adjusting the threshold $\Gamma_m$.

The above complicated EE maximization problem is a fractional program, which is challenging to be handled. In general, this kind of problem can be addressed based on the classical Dinkelbach's algorithm \cite{Dinkelbach} like in \cite{A.Kaushik2} and \cite{EEISAC}. As an alternative, in the following, we develop an SCA-based method to handle problem (\ref{Prob:ini}).

\section{Proposed Algorithm to Problem (\ref{Prob:ini})}
In this section, we first derive an equivalent reformulation to problem (\ref{Prob:ini}), and then apply the techniques of SCA and SDR to solve the reformulated problem. It is also proven that the relaxation of the SDR is rigorously tight.

\subsection{Problem Reformulation}
To reformulate (\ref{Prob:ini}) into a tractable form, we first introduce three auxiliary variables, i.e., $\mathbf V_k \triangleq \mathbf v_k \mathbf v_k^H$ for $k=1,\cdots,K$, $t$, and $u$.
The following proposition presents the equivalence between the reformulation and the original problem in (\ref{Prob:ini}).

\begin{proposition}\label{prop:tuW}
The problem in (\ref{Prob:ini}) is equivalently recast as
\begin{align}
\mathop \text{maximize}  \limits_{\{\mathbf V_k \succeq \mathbf 0\}_{k=0}^K,t,u} \quad
& t  \label{prob:refor} \\
\text{subject to}\quad
& u \geq \frac{1}{\rho} \sum_{k=1}^K \text{Tr}(\mathbf V_k) + \frac{1}{\rho} \text{Tr}(\mathbf V_0) + P_{\rm c}, \tag{\ref{prob:refor}{a}} \label{7b} \\
& t u \leq \sum_{k=1}^K \log_2(1+ \gamma_k') \tag{\ref{prob:refor}{b}}\label{C1},\\
& \mathbf h_k^H \mathbf V_k \mathbf h_k \geq  \tau_k \left(\sum_{i=0,i\neq k}^K \mathbf h_k^H  \mathbf V_i \mathbf h_k  + \sigma_k^2\right), \ k = 1, \cdots, K \tag{\ref{prob:refor}{c}}\label{7c}\\
& \sum_{k=1}^K \text{Tr}(\mathbf V_k) + \text{Tr}(\mathbf V_0) \leq P_\text{max}, \tag{\ref{prob:refor}{d}}\label{7d} \\
& \boldsymbol a^H(\theta_m) \left(\sum_{k=1}^K  \mathbf V_k +  \mathbf V_0 \right) \boldsymbol a(\theta_m) \geq \Gamma_m,  \ m = 1,\cdots, M \tag{\ref{prob:refor}{e}}\label{7e}\\
& t \geq 0, u \geq 0,  \tag{\ref{prob:refor}{f}} \label{7f}\\
& \text{rank}(\mathbf V_k) = 1, \quad k = 1, \cdots, K  \tag{\ref{prob:refor}{g}} \label{C2}
\end{align}
where $\gamma_k'=\frac{\mathbf h_k^H \mathbf V_k \mathbf h_k}{\sum_{i=0,i\neq k}^K \mathbf h_k^H  \mathbf V_i \mathbf h_k  + \sigma_k^2}$ is an equivalent expression of $\gamma_k$ in (\ref{def:gamma}) by substituting the definition of $\mathbf V_k$.
\end{proposition}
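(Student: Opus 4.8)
The plan is to exhibit a value-preserving correspondence between the feasible points of (\ref{Prob:ini}) and those of (\ref{prob:refor}), so that the two problems share the same optimal value and an optimizer of either one can be read off from an optimizer of the other. The backbone of the argument is the standard lifting $\mathbf V_k = \mathbf v_k \mathbf v_k^H$, under which $\|\mathbf v_k\|^2 = \text{Tr}(\mathbf V_k)$ and $|\mathbf h_k^H \mathbf v_i|^2 = \mathbf h_k^H \mathbf V_i \mathbf h_k$; consequently the throughput $R$, the power budget, and the beampattern terms all translate verbatim, and the SINR requirement $\gamma_k \ge \tau_k$ — whose denominator in (\ref{def:gamma}) is strictly positive — is equivalent to the cleared form (\ref{7c}). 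Moreover, for a Hermitian matrix, $\mathbf V_k \succeq \mathbf 0$ together with $\text{rank}(\mathbf V_k) = 1$ is equivalent to $\mathbf V_k = \mathbf v_k \mathbf v_k^H$ for some nonzero $\mathbf v_k$ (nonzero is forced by (\ref{7c}) since $\tau_k > 0$). It therefore remains only to handle the two scalar variables $t$ and $u$ that encode the fractional objective.

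First I would show that any feasible, in particular any optimal, point $(\{\mathbf v_k\}_{k=1}^K, \mathbf V_0)$ of (\ref{Prob:ini}) induces a feasible point of (\ref{prob:refor}) with the same objective value. Set $\mathbf V_k = \mathbf v_k \mathbf v_k^H$ for $k = 1, \ldots, K$, let $u$ equal the right-hand side of (\ref{7b}), i.e. the system power consumption, and put $t = \eta'_\text{EE}$. Then (\ref{7c})--(\ref{7e}) are precisely the constraints of (\ref{Prob:ini}) after the substitution; (\ref{7b}) holds with equality by construction; (\ref{C2}) holds since each $\mathbf v_k \neq \mathbf 0$; and (\ref{7f}) holds because the denominator of $\eta'_\text{EE}$ is strictly positive and $R \ge 0$. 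Finally $t u = \eta'_\text{EE} \cdot u = R = \sum_{k=1}^K \log_2(1 + \gamma_k')$, so (\ref{C1}) holds with equality as well. Hence the optimal value of (\ref{prob:refor}) is at least that of (\ref{Prob:ini}).

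Conversely, I would take a feasible point $(\{\mathbf V_k\}_{k=0}^K, t, u)$ of (\ref{prob:refor}) and recover $\mathbf v_k$ from the rank-one factorization of $\mathbf V_k$ guaranteed by $\mathbf V_k \succeq \mathbf 0$ and (\ref{C2}); reading the substitution backwards shows $(\{\mathbf v_k\}, \mathbf V_0)$ is feasible for (\ref{Prob:ini}). Without loss of optimality one may assume (\ref{7b}) is active: if $u$ strictly exceeds the power consumption, replacing $u$ by that smaller value keeps (\ref{7b}) satisfied and, since $t \ge 0$, only decreases the left-hand side of (\ref{C1}), so feasibility is preserved while $t$ is unchanged. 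With (\ref{7b}) tight, (\ref{C1}) reads $t \le R / u = \eta'_\text{EE}$ evaluated at the recovered point, so the objective of (\ref{prob:refor}) there is no larger than the value of $\eta'_\text{EE}$ for (\ref{Prob:ini}). Therefore the optimal value of (\ref{Prob:ini}) is at least that of (\ref{prob:refor}); combining the two inequalities gives equality of optimal values together with the stated recovery of optimizers.

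The only genuinely delicate point is the bilinear coupling $t u$ in (\ref{C1}): one must argue carefully that the inequality (\ref{7b}) may be taken active at an optimum — which is exactly what converts the product constraint into the ratio bound $t \le R/(\text{power})$ — and must track the sign conditions ($u > 0$, $t \ge 0$, $R \ge 0$) that make this manipulation and the division legitimate. Everything else is the routine bookkeeping of the SDR lifting.
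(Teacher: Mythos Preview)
Your proof is correct and follows essentially the same approach as the paper: the crux in both is that the auxiliary constraints (\ref{7b}) and (\ref{C1}) can be taken active at an optimum, which collapses the product $tu$ back into the ratio $R/(\text{power})$. The paper's version argues this tightness by a two-line contradiction (if either were slack one could decrease $u$ and then increase $t$), whereas you spell out the two-way correspondence explicitly and only need (\ref{7b}) tight, deducing the matching of optimal values from the resulting inequality $t \le \eta'_\text{EE}$ together with the forward construction; your treatment is more detailed but the underlying idea is identical.
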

\begin{proof}
See Appendix~\ref{proof:1}.
\end{proof}

The major advantage of reformulation (\ref{prob:refor}) is that the original fractional objective function in (\ref{Prob:ini}) is safely removed and most constraints in (\ref{prob:refor}) are of convex forms. In the following, we develop the efficient method to address problem (\ref{prob:refor}).

\subsection{Proposed Algorithm for Problem (\ref{prob:refor})}
The nonconvexity of problem (\ref{prob:refor}) lies in the constraint (\ref{C1}) and the rank-one constraints (\ref{C2}). To address these issues, we adopt the idea of SDR to relax (\ref{C2}) and apply the SCA method to deal with (\ref{C1}). We manipulate with the left-hand part and right-hand part of (\ref{C1}), separately.

Focusing on the left-hand part, we handle the nonconvex $f(t,u) \triangleq tu $ by invoking the SCA method of \cite{Glambda}, which allows us to replace $f(t,u)$ with its convex upper bound and then iteratively solve the sequence of resulting problems via judiciously updating the variables until
convergence.
To do this, define the following function
\begin{align} \label{def:Glambda}
F (t,u,\lambda) \triangleq \frac{\lambda}{2} t^2 + \frac{1}{2 \lambda}  u^2,
\end{align}
which is convex to $(t,u)$ and overestimates $f(t,u)$ for every fixed $\lambda > 0$ since $F (t,u,\lambda) - f(t,u) = \frac{1}{2}( \sqrt{\lambda} t - \frac{1}{\sqrt{\lambda}}u)^2 \geq 0 $.
In particular, for $\lambda = \frac{u}{t}$, it holds that
\begin{align}\label{fF}
f(t,u) = F (t,u,\lambda), \
\nabla f(t,u)  = \nabla F (t,u,\lambda).
\end{align}
Thereby, we replace $f(t,u)$ by $F \left(t,u,\lambda^{(l-1)}\right)$ and update $\lambda^{(l)} = \frac{u^{(l)}} {t^{(l)}}$ in each iteration with $\left(u^{(l)},t^{(l)}\right)$ being the obtained solution to $(u,t)$ in the $l$-th iteration.

Concerning the right-hand part of (\ref{C1}), $\log_2(1+ \gamma_k')$ equals
\begin{align}
 \log_2 \left( \sum_{i=0}^K \mathbf h_k^H  \mathbf V_i \mathbf h_k  + \sigma_k^2 \right)
- \log_2 \left( \sum_{i=0,i\neq k}^K \mathbf h_k^H  \mathbf V_i \mathbf h_k  +  \sigma_k^2 \right),
\end{align}
where the function $\log(\cdot)$ is concave with respect to $ \{\mathbf V_k\}_{k=0}^K$ while the minus of the second term makes this constraint nonconvex.
We handle this nonconvexity using the SCA technique.
Specifically, by exploiting the concavity of $\log_2\left( \sum_{i=0,i\neq k}^K \mathbf h_k^H  \mathbf V_i \mathbf h_k  + \sigma_k^2 \right)$, we relax it by applying the first-order Taylor expansion. This gives a concave lower bound of $\log_2(1+ \gamma_k')$ in the $l$-th iteration as follows
\begin{align}\label{up:taylor}
 \log_2(1+ \gamma_k')
\geq  \log_2\left( \sum_{i=0}^K \mathbf h_k^H  \mathbf V_i \mathbf h_k  + \sigma_k^2 \right)
- \left( a_k^{(l-1)}  + \frac{\log_2e}{2^{a_k^{(l-1)}}} \sum_{i=0,i\neq k}^K \mathbf h_k^H  \left(\mathbf V_i - \mathbf V_i^{(l-1)} \right) \mathbf h_k \right)
\triangleq  \underline{R}_k,
\end{align}
where $ \{\mathbf V_i^{(l-1)}\}_{i=0}^K$ denotes the optimal solution obtained in the $(l-1)$-th iteration, and\\
$a_k^{(l-1)} = \log_2\left( \sum_{i=0,i\neq k}^K \mathbf h_k^H  \mathbf V_i^{(l-1)} \mathbf h_k + \sigma_k^2 \right)$.

Now, applying the bounds in (\ref{def:Glambda}) and (\ref{up:taylor}), we acquire a sequence of surrogate problems to locally approximate problem (\ref{prob:refor}). In the $l$-th iteration, the surrogate convex optimization problem is formulated as
\begin{align}\label{prob:cvx}
\mathop \text{maximize}  \limits_{\{\mathbf V_k \succeq \mathbf 0\}_{k=0}^K,t,u} \quad
& t   \nonumber \\
\text{subject to}\quad
& F \left(t,u,\lambda^{(l-1)}\right) \leq \sum_{k=1}^K \underline{R}_k ,\nonumber \\
&  \rm (\ref{7b}), (\ref{7c}) \sim (\ref{7f}),
\end{align}
whose globally optimal solution can be found via, e.g., the interior point method \cite{cvx}.
The entire procedure of the developed SCA-based algorithm for solving (\ref{prob:refor}) is summarized in Algorithm \ref{alg:SCA}.
As shown in (\ref{fF}), the nonconvex original function, $f(t,u)$, and its convex approximation, $F(t,u,\lambda)$, have the identical value and gradient when $\lambda = \frac{u}{t}$. So do $\log_2(1+\gamma_k')$ and its approximation, $\underline{R}_k$, in (\ref{up:taylor}) when $\{\mathbf V_i^{(l-1)}\}_{i=0}^k = \{\mathbf V_i\}_{i=0}^k$. Thus,
according to \cite{Glambda}, the iterative procedure can converge to a Karush-Kuhn-Tucker (KKT) point of problem (\ref{prob:refor}) (omitting the rank-one constraints (\ref{C2})). Moreover, the main computational burden in Algorithm 1 stems from solving problem ({\ref{prob:cvx}}), whose complexity is $\mathcal O (N^{6.5} K^{3.5} + N^{4} K^{2} M^{1.5} )$ \cite{complexity}.

\begin{algorithm}[t]
\caption{SCA-Based Algorithm for Problem (\ref{prob:refor})}
\label{alg:SCA}
\begin{algorithmic}[1]
\STATE \textit{Initialization:} Set initial point $\{\mathbf{V}_k^{(0)}\}_{k=0}^K$, $\lambda^{(0)}$, iteration index $l = 0$, and convergence accuracy $\epsilon$.
\REPEAT
    \STATE Set $l = l + 1$.\\
    \STATE
    Solve problem (\ref{prob:cvx}) with $ \{\mathbf V_k^{(l-1)}\}_{k=0}^K$ and $\lambda^{(l-1)}$.\\

    \STATE Update $ \{\mathbf V_k^{(l)}\}_{k=0}^K$ and $\lambda^{(l)} = \frac{u^{(l)}}{t^{(l)}}$.

\UNTIL convergence.
\end{algorithmic}
\end{algorithm}

Note that the remaining issue is that the rank-one constraints in problem (\ref{prob:refor}) are relaxed via the SDR and the resulting problem in (\ref{prob:cvx}) may not return a rank-one solution. In general, an additional procedure like the Gaussian randomization \cite{GaussianR} is used to find a rank-one solution, which usually leads to high computational complexity and the optimality is not guaranteed. Fortunately, we show in the following theorem that there always exists a rank-one optimum for problem (\ref{prob:cvx}).

\begin{theorem}\label{prop:rank1}
There always exists a global optimum to (\ref{prob:cvx}), denoted by $\{ \mathbf {\widehat V}_k\}_{k=0}^K$, satisfying the rank-one constraints:
\begin{align}
\text{rank}(\mathbf {\widehat V}_k) = 1, \ k=1,\cdots,K.
\end{align}
\end{theorem}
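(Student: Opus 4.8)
The plan is to prove the theorem by a constructive rank-reduction argument applied to an arbitrary optimal solution of the surrogate problem (\ref{prob:cvx}). Let $\{\mathbf V_k^\star\}_{k=0}^K$ together with $(t^\star,u^\star)$ be any global optimum. The key observation is that in (\ref{prob:cvx}) the beamforming matrices $\{\mathbf V_k\}_{k=1}^K$ enter the constraints only through the linear (Hermitian) quantities $\mathbf h_j^H \mathbf V_k \mathbf h_j$ for $j=1,\dots,K$, $\boldsymbol a^H(\theta_m)\mathbf V_k \boldsymbol a(\theta_m)$ for $m=1,\dots,M$, and $\text{Tr}(\mathbf V_k)$; the objective $t$ does not involve them at all. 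Hence the feasibility of a candidate tuple depends on each $\mathbf V_k$ only through a finite list of linear functionals. The plan is therefore to fix $\mathbf V_0^\star$, $t^\star$, $u^\star$ unchanged and, for each user $k$, replace $\mathbf V_k^\star$ by a rank-one matrix $\mathbf {\widehat V}_k = \mathbf{\widehat v}_k \mathbf{\widehat v}_k^H$ that preserves all the relevant linear functionals (or changes them only in a feasibility-preserving direction), so that the new tuple is still feasible and still attains the same optimal value $t^\star$.

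The concrete steps I would carry out are as follows. First, write down the semidefinite program (\ref{prob:cvx}) explicitly as a feasibility-plus-linear-objective SDP in the variables $\{\mathbf V_k\}$, and note that for fixed $(t^\star,u^\star)$ and fixed $\mathbf V_0=\mathbf V_0^\star$ the remaining problem in $\{\mathbf V_k\}_{k=1}^K$ is a convex SDP with $\mathbf V_k \succeq \mathbf 0$ and a bounded number ($K$ SINR constraints of the form (\ref{7c}), $M$ beampattern constraints (\ref{7e}), one power constraint (\ref{7d}), and one surrogate-rate constraint $F(t^\star,u^\star,\lambda^{(l-1)}) \le \sum_k \underline R_k$) of linear constraints. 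Second, observe that the SINR constraint (\ref{7c}) for user $k$ can be rewritten as $(1+\tfrac1{\tau_k})\mathbf h_k^H \mathbf V_k \mathbf h_k \ge \sum_{i=0}^K \mathbf h_k^H \mathbf V_i \mathbf h_k + \sigma_k^2$, i.e. a lower bound on $\mathbf h_k^H \mathbf V_k \mathbf h_k$ once the other matrices are fixed; likewise the surrogate-rate constraint, after moving the $a_k^{(l-1)}$ and $\mathbf V_i^{(l-1)}$ constants and the $i\ne k$ cross terms to one side, is a lower bound on a linear combination of the $\mathbf h_k^H \mathbf V_i \mathbf h_k$. Third, apply a standard rank-reduction result for SDPs (the Pataki/Barvinok-type bound, or equivalently the explicit rank-one decomposition for matrices constrained by few linear forms, as used in MIMO beamforming literature) to show each $\mathbf V_k$ can be taken of rank at most one: with the power and beampattern constraints treated as the ``hard'' linear constraints common to all $k$ and the per-user SINR/rate terms exploited as one-sided inequalities, one can replace $\mathbf V_k^\star$ by a rank-one $\mathbf{\widehat v}_k \mathbf{\widehat v}_k^H$ chosen so that $\mathbf h_k^H \mathbf{\widehat V}_k \mathbf h_k = \mathbf h_k^H \mathbf V_k^\star \mathbf h_k$, $\boldsymbol a^H(\theta_m)\mathbf{\widehat V}_k\boldsymbol a(\theta_m) \le \boldsymbol a^H(\theta_m)\mathbf V_k^\star\boldsymbol a(\theta_m)$... — but actually the cleanest route is to keep $\sum_{k\ge1}\mathbf V_k + \mathbf V_0$ and each $\mathbf h_k^H \mathbf V_k \mathbf h_k$ exactly preserved, which suffices because all constraints are then satisfied with equality-or-slack exactly as before, and the objective $t^\star$ is untouched. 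Fourth, exclude the degenerate case $\mathbf V_k^\star = \mathbf 0$: this can only happen if $\tau_k = 0$, in which case $\mathbf{\widehat V}_k = \mathbf 0$ is trivially rank $\le 1$ and, under the tacit assumption $\tau_k>0$, the preserved value $\mathbf h_k^H \mathbf V_k^\star \mathbf h_k>0$ forces $\mathbf{\widehat V}_k \ne \mathbf 0$, i.e. exactly rank one. Finally, assemble $\{\mathbf{\widehat V}_k\}_{k=0}^K$ (with $\mathbf{\widehat V}_0 = \mathbf V_0^\star$) and verify it is feasible for (\ref{prob:cvx}) and achieves value $t^\star$, hence is a global optimum of rank-one type.

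The main obstacle I anticipate is making the rank-reduction step genuinely constructive and airtight while preserving \emph{all} of the linear side constraints simultaneously — in particular showing that after the reduction the beampattern constraints (\ref{7e}) and the power constraint (\ref{7d}), which couple all $K$ matrices, are still met. The standard SDP rank bound gives $\sum_k \text{rank}(\mathbf V_k)^2 \le$ (number of active linear constraints), which is not immediately one-per-matrix; one must instead use the sharper ``each $\mathbf V_k$ appears in only a few constraints'' structure. The clean way around this is the explicit rank-one decomposition lemma (of the type: if $\mathbf A \succeq \mathbf 0$ and $\mathbf B_1,\mathbf B_2$ are Hermitian, there is a vector $\mathbf v$ in the range of $\mathbf A$ with $\mathbf v^H \mathbf B_j \mathbf v = \text{Tr}(\mathbf A \mathbf B_j)$ for $j=1,2$), but here each $\mathbf V_k$ is tied to more than two linear forms once the beampattern directions are counted, so the decomposition must be adapted — likely by noting that the beampattern and power constraints only need to be preserved \emph{in aggregate} over $k$, not per $k$. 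I expect the bulk of the write-up effort, and the subtlety the authors' Appendix will have to handle carefully, to lie precisely in this bookkeeping of which linear functionals must be preserved exactly versus which only need a one-sided inequality.
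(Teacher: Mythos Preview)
Your high-level instinct --- that feasibility depends only on the aggregate $\sum_{k=0}^K \mathbf V_k$ together with the $K$ scalars $\mathbf h_k^H \mathbf V_k \mathbf h_k$, so one should preserve exactly those quantities --- is the same observation the paper exploits. However, your execution contains a genuine gap: you propose to keep $\mathbf{\widehat V}_0 = \mathbf V_0^\star$ unchanged. With $\mathbf V_0$ frozen, preserving the aggregate forces $\sum_{k\ge 1}\mathbf{\widehat V}_k = \sum_{k\ge 1}\mathbf V_k^\star$, and there is no reason a family of rank-one matrices, each additionally pinned by $\mathbf h_k^H \mathbf{\widehat V}_k \mathbf h_k = \mathbf h_k^H \mathbf V_k^\star \mathbf h_k$, should reproduce that full matrix sum. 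Your fallback --- a Pataki/Barvinok rank bound or a two-form rank-one decomposition lemma --- does not rescue this, since each $\mathbf V_k$ is coupled to far more than two linear forms (all $M$ beampattern directions, the trace, and the cross-channel quadratic forms), and the aggregate SDP bound does not deliver rank one per block. Also note that your proposed one-sided inequality $\boldsymbol a^H(\theta_m)\mathbf{\widehat V}_k\boldsymbol a(\theta_m) \le \boldsymbol a^H(\theta_m)\mathbf V_k^\star\boldsymbol a(\theta_m)$ points the wrong way: (\ref{7e}) is a \emph{lower} bound on the beampattern, so shrinking those values can destroy feasibility.

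The paper's proof closes exactly this gap by \emph{allowing} $\mathbf V_0$ to change. It sets
\[
\mathbf{\widehat v}_k \;=\; \bigl(\mathbf h_k^H \mathbf V_k^\star \mathbf h_k\bigr)^{-1/2}\,\mathbf V_k^\star \mathbf h_k,\qquad
\mathbf{\widehat V}_k = \mathbf{\widehat v}_k\mathbf{\widehat v}_k^H,\qquad
\mathbf{\widehat V}_0 \;=\; \mathbf V_0^\star + \sum_{k=1}^K\bigl(\mathbf V_k^\star - \mathbf{\widehat V}_k\bigr).
\]
This specific choice has two properties that make everything go through: (i) $\mathbf h_k^H \mathbf{\widehat V}_k \mathbf h_k = \mathbf h_k^H \mathbf V_k^\star \mathbf h_k$ exactly, and (ii) $\mathbf V_k^\star - \mathbf{\widehat V}_k \succeq \mathbf 0$ by a one-line Cauchy--Schwarz argument, so $\mathbf{\widehat V}_0 \succeq \mathbf 0$. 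Since the radar covariance $\mathbf V_0$ carries no rank constraint, it can absorb the leftover PSD mass; the aggregate $\sum_{k=0}^K \mathbf V_k$ is preserved identically, and with it all of (\ref{7b}), (\ref{7c}), (\ref{7d}), (\ref{7e}) and the surrogate-rate constraint. The missing idea in your plan is precisely this ``dump the residual into $\mathbf V_0$'' step, enabled by the Cauchy--Schwarz domination $\mathbf{\widehat V}_k \preceq \mathbf V_k^\star$; once you see it, no SDP rank-reduction machinery is needed at all.
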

\begin{proof}
See Appendix~\ref{proof:2}.
\end{proof}
Based on the above theorem, we could construct $\{ \mathbf {\widehat V}_k\}_{k=0}^K$ according to (\ref{def:hat}) when Algorithm \ref{alg:SCA} converges, and then calculate the optimal beamforming vectors $\{ \mathbf {v}^*_k\}_{k=1}^K$ by performing Cholesky decomposition to the rank-one $\{ \mathbf {\widehat V}_k\}_{k=1}^K$.

\begin{table}[t]
\caption{Simulation Parameters}
\label{table1}
    \centering
    \setlength\tabcolsep{1.5mm}{
	\begin{tabular}{|c|c|c|}
		\hline \textbf{Notation} &\textbf{Parameter} & \textbf{Value} \\
        \hline $N$ & Number of antennas at the BS  & 16\\
        \hline $K$ & Number of users & 2 \\
        \hline $M$ & Number of sensing directions & 4 \\
        \hline $P_\text{max}$ & Maximum transmit power at the BS  & 30 dBm\\
        \hline $P_{\rm c}$ & Circuit power of the system & 25 dBm\\
        \hline $\sigma^2$ & Noise power at the users  & -80 dBm \\
        \hline - & Path loss from the BS to each user & -99 dB \\
        \hline $\tau$ & Required minimum SINR at the users  &  5 dB \\
        \hline $\Gamma$ & Required radar beampattern gain &  20 dBm \\
        \hline $\rho$ & Amplifier efficiency at the BS  & 0.35 \\
        \hline $\xi$ & Dynamic power consumption coefficient &  -26 dBm/bps \cite{YeLiEE} \\
        \hline $\epsilon$ & Algorithm convergence accuracy &  0.001 \\
        \hline
	\end{tabular}%
}
\end{table}

\section{Simulation Results}
The performance of the proposed algorithm is verified via numerical simulations. Assume that the ULA at the BS owns half-wavelength spacing between adjacent antennas, i.e., $d = \lambda/2$, and $M=4$ sensing directions are $-54^{\circ}$, $-18^{\circ}$, $18^{\circ}$, and $54^{\circ}$. The channels from the BS to the users follow the line-of-sight (LoS) model, where the angle of departure (AoD) from the BS to user $k$ is denoted by $\phi_k$. We set $K=2$, $\phi_1 = -30^{\circ}$, and $\phi_2= 30^{\circ}$.
The noise power and the SINR thresholds for different users are set to $\sigma^2 = \sigma^2_k$ and $\tau = \tau_k,\ k = 1,\cdots,K$, and the beampattern gains at all target directions are set to $\Gamma = \Gamma_m,\ m = 1,\cdots,M$. Specific values of the simulation parameters are listed in Table \ref{table1}.

\begin{figure}[t]
\begin{center}
      \epsfxsize=7.0in\includegraphics[scale=0.6]{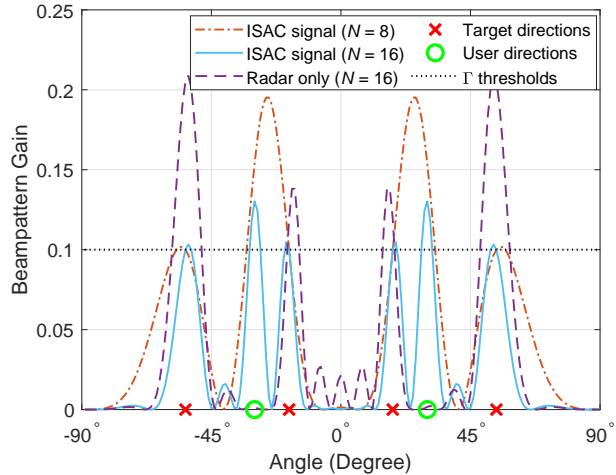}
      \caption{Beampattern gain achieved by the proposed algorithm.}\label{fig:angledegree}
    \end{center}
\end{figure}

\begin{figure}[t]
\centering
\subfigure[Detection probability]{
\begin{minipage}[t]{0.5\linewidth}
\centering
\includegraphics[width=3in]{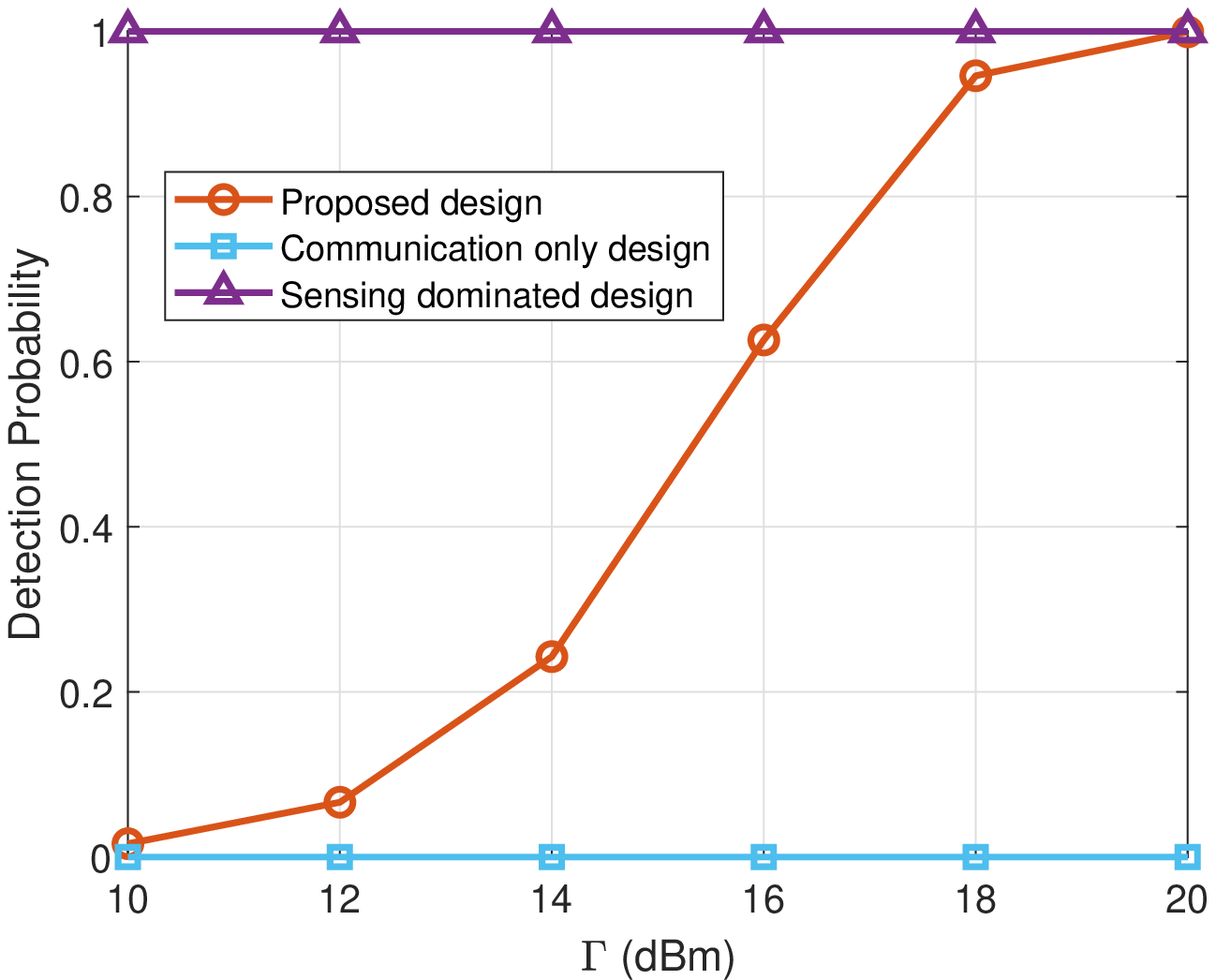}
\end{minipage}%
}%
\subfigure[EE performance]{
\begin{minipage}[t]{0.5\linewidth}
\centering
\includegraphics[width=3in]{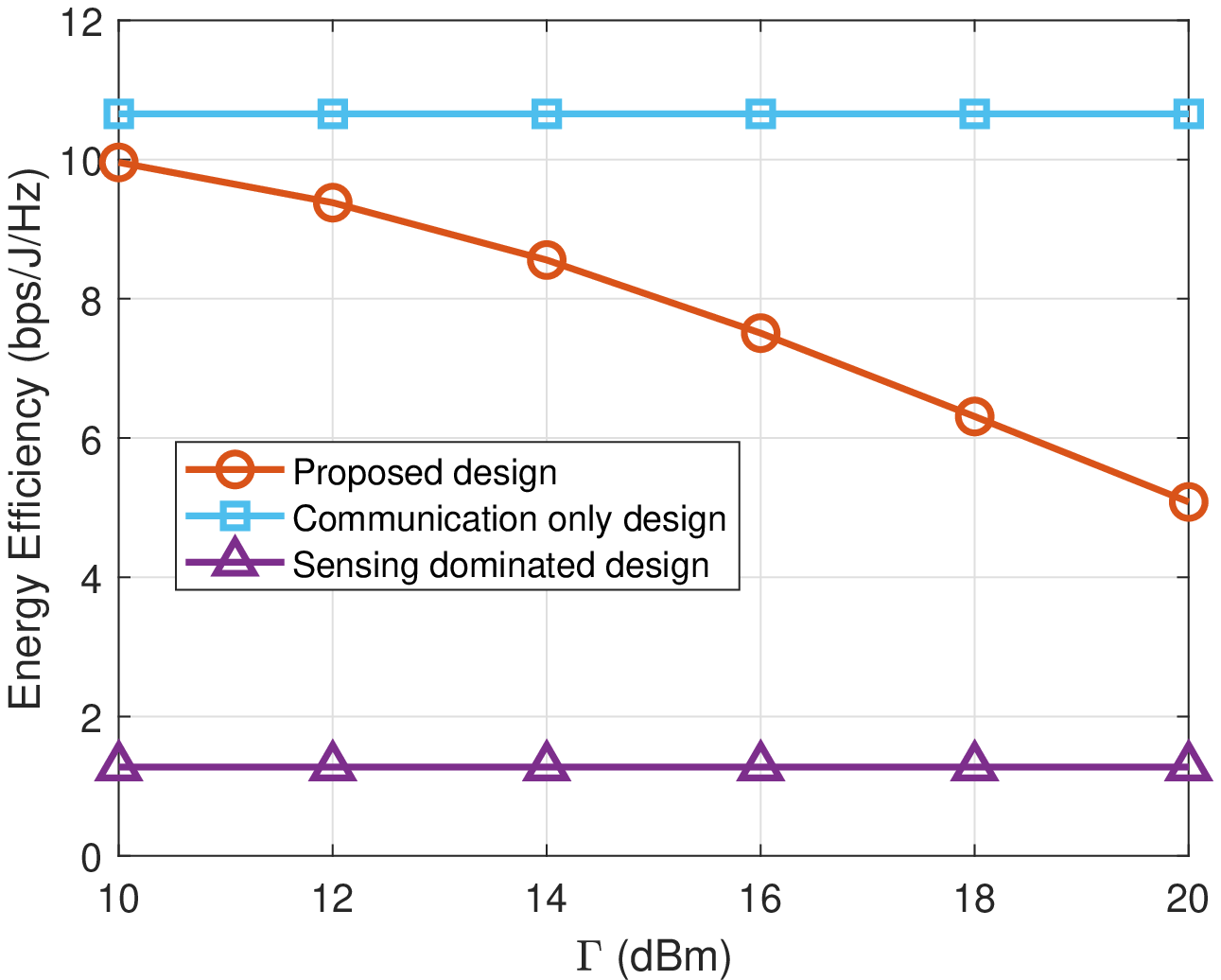}
\end{minipage}%
}%
\centering
\caption{Detection probability and EE performance versus required radar beampattern gain.}\label{fig:gamma}
\end{figure}



Fig. \ref{fig:angledegree} shows the beampattern gains achieved by the proposed design. It is clearly seen that the optimized ISAC design allocates the beampattern towards desired directions of the communication users and the sensing targets and the gains towards target directions are all larger than the threshold $\Gamma$, which indicates the effectiveness of our proposed algorithm.

Fig. \ref{fig:gamma} depicts the detection probability and EE performance of the proposed design. We consider two benchmark schemes for comparison.
The first one is a communication only scheme, which is obtained by solving problem (\ref{prob:refor}) without the sensing constraints in (\ref{7e}). The second one is a sensing dominated design, which is obtained by maximizing the transmit beampattern gain towards sensing directions \cite{MIMOradar2007}, while guaranteeing the minimal SINR requirements of communication users. We calculate the detection probability for point-like target detection according to \cite{C.Tsinos2021} and \cite{detection}, where the power of the target and the noise variance are set to 25 dBW and 0 dBW, respectively, and the false alarm probability is set to $10^{-5}$. From Fig. \ref{fig:gamma} (a), it can be found that the sensing performance in terms of detection probability can be guaranteed by setting a relatively large $\Gamma$. From Fig. \ref{fig:gamma} (b), it can be found that, as $\Gamma$ increases, the achieved EE of the proposed design decreases. This is because, with the growth of $\Gamma$, more transmit power should be used for guaranteeing the sensing performance and meanwhile the communication rate is compromised.

\section{Conclusion }
We investigated the EE maximization problem of an ISAC system, where the BS takes radar sensing into account while keeping communication with users. The complicated fractional program was transformed into a tractable form and further iteratively addressed by adopting the techniques of SCA and SDR. We also proved that the introduced relaxation is tight. The effectiveness of our proposed algorithm was verified through the simulation results.

\begin{appendices}
\section{Proof of Proposition ~\ref{prop:tuW}}\label{proof:1}
In order to guarantee the equivalence between problem (\ref{prob:refor}) and problem (\ref{Prob:ini}), the key is to verify that the constraints (\ref{7b}) and (\ref{C1}) in problem (\ref{prob:refor}) must keep active at the optimality. We complete the proof by contradiction.

Specifically, assume that an optimal solution to problem (\ref{prob:refor}) is obtained as $(\{\mathbf V'_k\}_{k=0}^K,t',u')$, while at least one of (\ref{7b}) and (\ref{C1}) are inactive at this optimality. In such cases, we can always decrease the value of $u'$ and subsequently increase $t'$ or directly increase $t'$, thus yielding a larger objective.
These contradict with the fact that $t'$ is maximized at this solution.
Therefore, we conclude that (\ref{Prob:ini}) and (\ref{prob:refor}) are equivalent.

\section{Proof of Theorem ~\ref{prop:rank1}}\label{proof:2}
Let $\{\mathbf {\widetilde V}_k\}_{k=0}^K$ be an arbitrary optimal solution to problem (\ref{prob:cvx}). Then, we can construct a new solution, $\{\mathbf {\widehat V}_k\}_{k=0}^K$, based on $\{\mathbf {\widetilde V}_k\}_{k=0}^K$ as
\begin{align}\label{def:hat}
\mathbf {\widehat V}_k =& \mathbf {\widehat v}_k \mathbf {\widehat v}_k^H, \ k=1,\cdots,K \nonumber  \\
\mathbf {\widehat V}_0 =& \sum_{k=1}^K \mathbf {\widetilde V}_k + \mathbf {\widetilde V}_0 - \sum_{k=1}^K \mathbf {\widehat V}_k,
\end{align}
where $\mathbf {\widehat v}_k = (\mathbf h_k^H \mathbf {\widetilde V}_k \mathbf h_k )^{-1/2} \mathbf {\widetilde V}_k \mathbf h_k, \ k=1,\cdots,K$.

It is straightforwardly seen that $\mathbf {\widehat V}_k,\ k=1,\cdots,K$ is rank-one and positive semidefinite. In the following, we prove that $\{\mathbf {\widehat V}_k\}_{k=0}^K$ is also feasible and optimal to problem (\ref{prob:cvx}).

First, we verify that $\mathbf {\widehat V}_0 \succeq \mathbf 0 $. Specifically, for an arbitrary $\mathbf f \in \mathbb C^{N \times 1}$, we have
\begin{align} \label{positive}
\mathbf f^H (\mathbf {\widetilde V}_k - \mathbf {\widehat V}_k) \mathbf f = \mathbf f^H \mathbf {\widetilde V}_k \mathbf f
- (\mathbf h_k^H \mathbf {\widetilde V}_k \mathbf h_k )^{-1} | \mathbf f^H \mathbf {\widetilde V}_k \mathbf h_k |^2
\overset{(\rm a)}{ \geq } \mathbf f^H \mathbf {\widetilde V}_k \mathbf f  - (\mathbf h_k^H \mathbf {\widetilde V}_k \mathbf h_k )^{-1} | \mathbf f^H \mathbf {\widetilde v}_k |^2 |\mathbf {\widetilde v}_k \mathbf h_k |^2
=  0,
\end{align}
where step (a) follows from the Cauchy-Schwarz inequality. Thus, it holds that $\mathbf {\widetilde V}_k - \mathbf {\widehat V}_k \succeq \mathbf 0$ and $\sum_{k=1}^K \mathbf {\widetilde V}_k - \sum_{k=1}^K \mathbf {\widehat V}_k \succeq \mathbf 0$. Since $\mathbf {\widetilde V}_0 \succeq \mathbf 0$, it then follows from (\ref{def:hat}) that $\mathbf {\widehat V}_0 \succeq \mathbf 0$.

Next, based on (\ref{def:hat}), we have the following equality
\begin{align}\label{sum1}
\sum_{k=1}^K \mathbf {\widetilde V}_k + \mathbf {\widetilde V}_0 = \sum_{k=1}^K \mathbf {\widehat V}_k + \mathbf {\widehat V}_0,
\end{align}
which implies that the constraints (\ref{7b}), (\ref{7d}), and (\ref{7e}) hold. Moreover, it can be easily shown that
\begin{align}\label{sum2}
\mathbf h_k^H \mathbf {\widehat V}_k \mathbf h_k = \mathbf h_k^H \mathbf {\widehat v}_k \mathbf {\widehat v}_k^H \mathbf h_k = \mathbf h_k^H \mathbf {\widetilde V}_k \mathbf h_k.
\end{align}
Then, based on (\ref{sum1}) and (\ref{sum2}), the SINR constraint in (\ref{7c}) also holds for $\{\mathbf {\widehat V}_k\}_{k=0}^K$ since it can be rewritten as
$
\frac{1+\tau_k}{\tau_k} \mathbf h_k^H \mathbf {\widetilde V}_k \mathbf h_k \geq \sum_{i=0}^K\mathbf h_k^H  \mathbf {\widetilde V}_i \mathbf h_k + \sigma^2_k.$

The last step is to verify that the objective value achieved by $\{\mathbf {\widehat V}_k\}_{k=0}^K$ is identical to that by $\{\mathbf {\widetilde V}_k\}_{k=0}^K$. Equivalently, this is to prove that the value of $\underline{R}_k$ in the first constraint in (\ref{prob:cvx}) remains unchanged. To this end, we reexpress the terms related to $\{\mathbf {\widetilde V}_k\}_{k=0}^K$ in $\underline{R}_k$ from (\ref{up:taylor}) as
\begin{align}\label{sum3}
\log_2\left( \sum_{i=0}^K \mathbf h_k^H  \mathbf {\widetilde V}_i \mathbf h_k  + \sigma^2_k \right)
 -  \frac{\log_2e}{2^{a_k^{(l-1)}}}  \left( \sum_{i=0}^K\mathbf h_k^H  \mathbf {\widetilde V}_i \mathbf h_k  -  \mathbf h_k^H \mathbf {\widetilde V}_k \mathbf h_k \right).
\end{align}
Due to the equalities in (\ref{sum1}) and (\ref{sum2}), the expression in (\ref{sum3}) remains the same when substituting $\{\mathbf {\widehat V}_k\}_{k=0}^K$ to it.

Thus far, it is verified that the constructed $\{\mathbf {\widehat V}_k\}_{k=0}^K$ in (\ref{def:hat}) is a feasible solution and also a global optimum to problem (\ref{prob:cvx}), which completes the proof.
\end{appendices}

\end{document}